\documentclass[12pt]{article}
\usepackage[margin=1in]{geometry}
\usepackage{amsmath,amssymb,amsthm,mathrsfs}
\usepackage{hyperref}
\usepackage{xurl}
\usepackage{natbib}

\providecommand{\APACmonth}[1]{}
\providecommand{\APACday}[1]{}

\usepackage{graphicx}
\usepackage{caption}
\usepackage{enumitem}
\usepackage{mathtools}
\usepackage{tikz}
\usetikzlibrary{arrows.meta,calc,decorations.pathmorphing}

\newcommand{\be}{\begin{equation}}
\newcommand{\ee}{\end{equation}}

\newcommand{\owedge}{\mathbin{\bigcirc\mspace{-15mu}\wedge\mspace{3mu}}}

\newtheorem{prop}{Proposition}
\newtheorem{thm}{Theorem}
\newtheorem{lemma}{Lemma}

\theoremstyle{remark}
\newtheorem{remark}{Remark}

\title{$\delta$-flatness and its natural extension}
\author{Henrique de A. Gomes \\ \it University of Bonn, Bonn \\ \it Oriel College, University of Oxford \\ \tt gomes.ha@gmail.com}
\date{}

\begin{document}

\maketitle

\begin{abstract}
What, if anything, makes Minkowski spacetime a privileged local reference geometry in General Relativity? Recent work by \cite{WeatherallFletcher} argues: nothing. In response, I proposed a criterion of ``$\delta$-flatness'', which bounds the magnitude of tidal acceleration within a tubular neighborhood by $\delta$, and showed that Minkowski uniquely saturates the bound. Here I ask how $\delta$-flatness generalises, and distinguish two kinds of extension. Reference-metric generalisations compare physical tidal acceleration with a reference tidal-force term constructed from another metric. Lorentzian signature obstructs every such comparison in which the reference fails to share the lightcones of the physical metric. Criterion modifications instead build the criterion from the physical geometry alone. The space of such modifications is wide, and I focus on those closest to $\delta$-flatness, where one bounds the deviation of tidal acceleration from a non-zero target. The minimal candidate, which I call $\delta$-MSS, has a singular limit that picks out the maximally symmetric spacetimes (Minkowski, de Sitter, Anti-de Sitter) via Schur's theorem. A restricted variant, which compares $g$ with a conformal rescaling of itself, either collapses to $\delta$-MSS or fails to pick out a unique geometry. The chief result is that $\delta$-flatness distinguishes Minkowski with no free parameter. $\delta$-MSS, a weaker variant, picks out the maximally symmetric family at the cost of an externally supplied scalar. %It is possible that criteria built from different mathematical machinery might single out other spacetimes, but pursuing this is beyond the scope of this paper.
\end{abstract}

\section{Introduction}

The question of whether general-relativistic spacetimes are ``locally flat'' has recently attracted renewed philosophical attention. Einstein's Equivalence Principle suggests that the dynamics in a sufficiently small region of a curved spacetime should be indistinguishable from those in special relativity. If Special Relativity is merely one solution among many in the ``panoply'' of General Relativity's models, its role is pragmatic rather than foundational. But if Special Relativity holds \emph{locally} in a substantive sense, so that curved spacetimes are ``locally special relativistic'', then special relativistic reasoning is not merely a convenient approximation but a licensed guide to qualitative general relativistic phenomena. Much of what is qualitatively distinctive about GR turns on geodesic deviation.

In recent work, \cite{WeatherallFletcher} have argued that standard formulations of local flatness fail to provide a substantive sense in which flat spacetime plays a privileged role in General Relativity. One class of formulations appeals to coordinate constructions, notably the existence of Riemann normal coordinates in which the metric components are Minkowskian and the connection coefficients vanish at a point. But a central lesson of General Relativity is that physical effects are geometrical and coordinate-independent. Privileging flat spacetime via a special class of coordinate systems runs against the spirit of the theory, and at a minimum invites scrutiny.

 A second class of formulations employs the isomorphism between the tangent space at any point and Minkowski spacetime. As \cite{WeatherallFletcher} correctly observe, such formulations do not involve any approximation: the tangent space at every point of every Lorentzian manifold is Minkowski spacetime, exactly. Such a property is strictly differential-geometric: it is not a relation of approximation between the curved spacetime and flat spacetime, and so  it  cannot ground any claim that a spacetime is ``locally \emph{approximately} flat.''

  The central result of \cite{WeatherallFletcher} (Theorem 5) sharpens these ideas: they show that to first order in metric derivatives, every Lorentzian metric locally approximates every other Lorentzian metric. Local flatness, on any of these readings, is not a substantive geometric property.

In response, \cite{Teh2024} offered two defenses. The first ties Riemann normal coordinates to geodesic deviation, arguing that these coordinates have privileged physical status. But this still operates within the framework of coordinate expansions. The second describes the metric expansion purely geometrically, using Synge's world function and related bitensor technology, and is genuinely coordinate-free. \citet{LinnemannReadPuzzle}, in a recent appraisal of the debate, have observed that the geometric reformulation does not  dissolve the base-point problem: even the Synge-style construction must select a Minkowski metric around which to expand. The expansion
 takes the  form
\begin{equation}
\label{eq:LR-expansion}
g_{IJ} = \eta_{IJ} - \tfrac{\delta^2}{3}\,R_{ILJK}\,x^L x^K + \mathcal{O}(x^3),
\end{equation}
with $\eta_{IJ}$ as the chosen Minkowski base point and curvature entering through higher-order corrections (developed at length in \citealp{Teh2024}). Here $\delta$ is the ratio of a probing length to a characteristic curvature length. Linnemann and Read note that the choice of Minkowski is not uniquely constrained by the expansion. The expansion does not fix which particular Minkowski metric is intended, and indeed it does not require that Minkowski rather than some other metric should be the base point at all: this is described by Hari--Kothawala (\citeyear{HariKothawala2020}), who show how generalised normal coordinates can describe a similar expansion around an arbitrary reference Lorentzian metric.

In previous work \citep{Gomes_flat}, I proposed a different defense that bypasses the coordinate question entirely. The geodesic deviation equation \citep{HawkingEllis, Poisson_book} gives the relative acceleration between neighboring geodesics. Such a quantity  is both coordinate-invariant and operationally measurable. Clearly, it requires neither a choice of a chart nor of a frame. A spacetime is then called ``$\delta$-flat'' if the norm of this acceleration along geodesic segments remains bounded by $\delta$ within a tubular neighborhood of fixed proper radius $\rho > 0$. While every spacetime is $\delta$-flat for some such $\rho>0$, it is only Minkowski spacetime that saturates this bound: as $\delta \to 0$,  for $\rho>0$,  the Riemann tensor must vanish.

$\delta$-flatness is coordinate-invariant, but one might still ask whether it tacitly privileges Minkowski as a base point for a functional expansion. But this challenge---that \citet{LinnemannReadPuzzle} press against expansion-based approaches (``which Minkowski?'', ``why Minkowski rather than another metric?'', their concerns following \eqref{eq:LR-expansion})---has no purchase here, as the flat metric is not at all invoked in the definition. The $\delta$-flatness criterion  picks no metric base point: it bounds an intrinsic quantity (the magnitude of tidal acceleration). And the bound is saturated by flat spacetime because the absence of tidal force is an intrinsic property of a spacetime, not a comparison with any reference.\footnote{ One should not be tempted by the algebraic identity $\|\text{Dev}(g)\| = \|\text{Dev}(g) - \text{Dev}(\eta)\|$ into reading the criterion as covertly comparing $g$ to $\eta$: the same identity equally motivates a generalisation of the form $\|\text{Dev}(g) - P(\text{Dev}(g'))\|$ for any polynomial $P$ with $P(0)=0$, each such $P$ recovering $\delta$-flatness in the case $g'=\eta$. The rewriting under-determines the generalisation.}

A separate question can still arise: whether $\delta$-flatness has a natural extension that picks out non-Minkowski reference geometries. Two kinds of extension present themselves, and disentangling them is the central business of this paper.

The first kind of extension \emph{compares a given metric with a reference metric}: such a criterion would bound not the magnitude of tidal acceleration itself but its deviation from a reference tidal-force term constructed from another metric $g'$. It would ask whether $(M,g)$ is $\delta$-similar to $(M,g')$. Could one define a ``$\delta$-Schwarzschild'' condition bounding the difference between the tidal forces in a laboratory spacetime and those of a black hole? Section~\ref{sec:obstruction} shows that Lorentzian signature obstructs this kind of extension whenever $g'$ doesn't share the lightcones of $g$. The mismatch between physical and reference geometry destroys the orthogonality on which $\delta$-flatness depends, and opens a ``null degeneracy'' that lets infinite classes of distinct spacetimes saturate the bound.

The second kind of extension \emph{modifies the criterion itself}: change the right-hand side, or more generally the entire content of the inequality, drawing only on the geometry of the physical spacetime, without invoking any reference metric. This amounts to choosing something other than solely the bare tidal force to construct a criterion of local approximations of general spacetimes.

The space of such admissible  modifications is wide and my interests here, which I will defend below, are on criteria based on tidal acceleration, so  I will investigate only  the corner closest to $\delta$-flatness, where one bounds the deviation of tidal acceleration from a non-zero target. Within that corner the minimal nontrivial choice is $\delta$-MSS (\S\ref{sec:homogeneous}). The construction preserves the orthogonality on which $\delta$-flatness depends, so the null degeneracy does not arise. Its singular limit forces the Riemann tensor into the constant-sectional-curvature form, and Schur's theorem constrains that to be a global constant. The saturating geometries are exactly the maximally symmetric spacetimes (MSS): Minkowski, de Sitter, and Anti-de Sitter, one for each sign of the constant. $\delta$-MSS singles out the maximally symmetric family only once one supplies $\Lambda$ (independently of the apparatus). One restricted construction survives the obstruction of \S\ref{sec:obstruction}: comparing $g$ with a rescaling $\Omega^2 g$ of itself, which preserves the lightcones. Since the comparator varies with $g$ once $\Omega$ is fixed, this is a modification of the criterion rather than a comparison with a fixed reference, and I treat it in \S\ref{sec:homogeneous}, alongside $\delta$-MSS. It either collapses to $\delta$-MSS or admits non-unique saturating families (Appendix~\ref{app:conformal-saturation}).

Whether other operationally motivated criteria, drawn from different mathematical machinery, could single out other reference geometries is a substantive question. It is not easy to `reverse-engineer' the operational criterion that would single out a given geometry, and pursuing the question would go well beyond the scope of this paper.

\paragraph{A word on scope.}

The question of local similarity can be posed in many different senses, only one of which is the focus of the present paper.  I focus on the \emph{tidal-acceleration} sense: which extensions of $\delta$-flatness yield well-posed local criteria, whether by comparison with a reference metric or by modification of the criterion itself.

It is easy to defend tidal-acceleration as a natural choice. For, to the extent that it operationally encodes the Riemann curvature, it is, as \cite{Synge1960} insisted, the only observer-independent meaning the term ``gravitational field'' admits: ``In Einstein's theory, either there is a gravitational field or there is none, according as the Riemann tensor does not or does vanish. This is an absolute property'' (\citealp{Synge1960}, p.~ix). The geodesic deviation equation is the local instantiation of that curvature, and many canonical applications of GR employ it. For instance, gravitational-wave detection measures tidal strain directly; gravitational lensing and the bending of starlight measure the focusing of light rays by curvature; the singularity theorems use the geodesic deviation equation as their main technical tool, etc. The $\delta$-flatness criterion exploits this directly by bounding the magnitude of tidal force, with no reference to coordinates or frames.

In sum, the criterion under which uniqueness holds employs the geodesic-deviation framework, which I take to be the natural setting given the desiderata of coordinate-invariance and operational accessibility. Thus what I present is one robust sense in which flat spacetime uniquely approximates a general spacetime. Other formulations, such as first-order metric expansions, Synge bitensor expansions, or frame-bundle approximations, may provide different senses of local similarity, which I will not try to explore. 

\paragraph{A second word on scope: chronogeometry.}\label{par:chronogeometry}
The criterion also takes for granted the standard chronogeometric interpretation of $g$: free-falling test bodies follow $g$-geodesics, and the clocks and rods of an inertial observer record $g$-intervals. That reading is common ground in the debate relevant for this paper. The question at issue is whether any substantive local-approximation property survives the analysis of \citet{WeatherallFletcher}; whether and how $g$ acquires chronogeometric significance is beyond the scope of the paper. Nonetheless, I admit that this assumption limits the reach of the criterion that I am proposing. Some discussions of local flatness aim at chronogeometry itself, asking in virtue of what the metric has the physical meaning that it does, as in the dynamical approach to spacetime theories \citep{Brown_book,BrownRead_dyn}. %In that setting one cannot argue that $g$ inherits its significance from the local approximate validity of Special Relativity and then cash out the latter as $\delta$-flatness. The operational reading of the criterion already presupposes what such an argument sets out to establish, and the reasoning would move in a circle. 
Where I appeal below to what clocks and rods measure (\S\ref{sec:obstruction}), I presuppose this background.

\section[The Mechanism of delta-Flatness]{The Mechanism of \texorpdfstring{$\delta$}{delta}-Flatness}\label{sec:mechanism}

Two astronauts in free fall, initially at rest relative to each other, will drift apart (or together) if spacetime is curved. This tidal acceleration, the relative acceleration of neighboring geodesics, gives the operational meaning of curvature, and it is coordinate-invariant: no choice of chart can make it vanish where the Riemann tensor does not vanish.

Now we will see that $\delta$-flatness provides an interesting unique bound because tidal acceleration is orthogonal to the observer's four-velocity. This orthogonality confines the acceleration to the spacelike subspace, where the Lorentzian norm is positive-definite, ensuring the relevant inequality is uniquely saturated.

Consider a timelike geodesic $\gamma$ with unit tangent $v^a$ ($v^a v_a = -1$, $v^a \nabla_a v^b = 0$). Let $r^a$ be a deviation vector connecting $\gamma$ to a neighboring geodesic, such that the Lie bracket vanishes: $[v, r] = v^a\nabla_a r^b - r^a\nabla_a v^b = 0$. We can choose $r^a$ orthogonal to $v^a$ at a point, and this condition propagates along the geodesic.

The relative acceleration is given by the geodesic deviation equation:
\begin{equation}
\label{eq:geodesic_dev}
(\text{Dev}(g)\cdot v)^a := v^c\nabla_c(v^b\nabla_b r^a) = {R^a}_{bcd} v^b v^d r^c.
\end{equation}
From now we will omit the dependence on $v$ and write $\text{Dev}(g)^a$.  Contracting this deviation with $v_a$:
\begin{equation}
v_a \text{Dev}(g)^a = R_{ebcd} v^e v^b v^d r^c = 0
\end{equation}
by the antisymmetry $R_{ebcd} = -R_{becd}$.

This orthogonality confines the tidal acceleration to the spacelike subspace orthogonal to $v^a$, where the Lorentzian metric induces a positive-definite norm:
\begin{equation}
\|\text{Dev}(g)\| := \sqrt{g_{ab} \,\text{Dev}(g)^a \,\text{Dev}(g)^b}
\end{equation}

A spacetime is \emph{$\delta$-flat} if every timelike geodesic can be covered by segments for which this magnitude remains bounded by $\delta$ within a tubular neighborhood of fixed proper radius $\rho > 0$:
\begin{equation}
\label{eq:delta_flat}
\|\text{Dev}(g)\| < \delta
\end{equation}
In the limit $\delta \to 0$ (for any fixed $\rho>0$), the condition $\|\text{Dev}(g)\| \to 0$ implies $\text{Dev}(g)^a \to 0$, because for spacelike vectors $\|X\| = 0$ only if $X^a = 0$. Since this must hold for arbitrary $r^c$ and $v^b$, the Riemann tensor itself must vanish:\footnote{The implication ``$\text{Dev}(g)^a=0$ for all timelike $v^b$ and orthogonal $r^c \Rightarrow R^a{}_{bcd}=0$'' is the standard fact that the sectional curvature on two-planes determines the Riemann tensor in any signature, given its algebraic symmetries; see e.g.\ \citet[Prop.~41, p.~79]{Oneill} or \citet[\S 3.2b]{Wald_book}.}
\begin{equation}
{R^a}_{bcd} = 0 \quad \text{(Flat Spacetime)}
\end{equation}
Minkowski spacetime uniquely saturates the condition: smaller errors correspond to smaller curvature, and vanishing error to flat geometry.

\section[Reference-Metric Generalisations]{Reference-Metric Generalisations}\label{sec:obstruction}

$\delta$-flatness bounds an intrinsic quantity and need not be formulated as a comparison to an auxiliary metric.  But the question of how it generalises is independently substantive, and two kinds of generalisation present themselves. One kind replaces the $0$ on the right-hand side of \eqref{eq:delta_flat} with a reference tidal-force term constructed from another metric $g'$. The other modifies the right-hand side directly, or more generally the entire content of the inequality, without invoking any reference metric. This section takes up the first kind and the second is the subject of \S\ref{sec:homogeneous}.

A natural proposal in the first family would define ``$\delta$-similarity'' to an arbitrary reference spacetime $(M, g')$, characterised by its own Riemann tensor $R'_{abcd}$. If well-posed for arbitrary $g'$, the result would be a family of local comparisons indexed by a reference geometry, with $\delta$-flatness as the case $g'=\eta$. As we shall see, Lorentzian signature obstructs the generalisation for every reference that doesn't share the lightcones of $g$. Restricting attention to physical metrics that share the lightcones of the reference yields a construction indexed by a conformal factor rather than by a fixed reference metric. I treat it in \S\ref{sec:homogeneous} as a criterion modification.

\subsection{The Null Degeneracy}

How should we define the ``reference tidal force'' from $g'$? Again, two proposals present themselves. 

\paragraph{The dynamical proposal.}
The direct option is to use relative acceleration using the reference connection $\nabla'$, writing down the geodesic deviation in $(M,g')$. But a laboratory observer follows a $g$-geodesic, not a $g'$-geodesic. In the reference geometry, the same curve is generally accelerating, with proper $g'$-acceleration $a'^a = C^a_{bc}\,v^b v^c$, where $C^a_{bc} := {\Gamma'}^a_{bc} - \Gamma^a_{bc}$ is the difference connection. Along a curve that is not a $g'$-geodesic, the deviation equation acquires  terms such as:
\begin{equation}
(\text{Dev}(g')\cdot v)^a
= {R'}^a_{\ bcd}\,v^b v^d r^c
+ \underbrace{(\nabla'_{r} C^a_{de})\,v^d v^e
+ 2\,C^a_{de}\,(\nabla'_{r} v^d)\,v^e}_{\text{inertial obstruction }\mathcal{O}^a}.
\end{equation}
The obstruction terms $\mathcal{O}^a$ are not features of $g'$'s curvature. They arise when we use $\nabla'$ on a curve that is geodesic with respect to $\nabla$. Of course, what we want to ask is how the tidal forces in $(M,g)$ compare with those $g'$ \emph{would} produce, but $(\text{Dev}(g')$ has terms that are artifacts of using a fictitious metric that produces fictiotious inertial terms.

\paragraph{The algebraic curvature.}
A natural alternative is to keep only the curvature term:
\begin{equation}
\label{eq:ref-alg}
\text{Ref}(g')^a := {R'}^a_{\ bcd}\,v^b v^d r^c.
\end{equation}
I will write ``Ref'' rather than ``$\text{Dev}(g')$'' to emphasise that this is not the geodesic-deviation acceleration of $(M,g')$ but a reference tidal-force term constructed from $g'$ and evaluated in the physical spacetime.

Here such a $\text{Ref}(g')^a$ is a hybrid. The metric $g$ supplies the kinematics, namely the curve $\gamma$ as a geodesic with four-velocity $v^a$ and part of a congruence with deviation vector $r^a$, while $g'$ supplies only the curvature. A $g$-geodesic is generally not a $g'$-geodesic, so $v^a$ need not be $g'$-timelike, and $r^a$ need not be $g'$-orthogonal to $v^a$. ``$\delta$-similarity to $g'$'' is therefore here shorthand for ``$\delta$-closeness in tidal-force, with $\text{Ref}$ drawn from $R'$''; it is a comparison between two Riemann tensors evaluated on common $g$-kinematic objects. 

The proposal is then to call $(M,g)$ \emph{$\delta$-similar to $g'$} if
\begin{equation}
\label{eq:delta_sim}
\|\, \text{Dev}(g)^a - \text{Ref}(g')^a \,\| < \delta.
\end{equation}
For this to work, that is, for $\delta\to 0$ to imply geometric identity to $g'$, the difference vector must be strictly spacelike. 

For the physical tidal force, the analogous contraction of the Riemann tensor with the deviation vector vanishes because $R_{abcd}=-R_{bacd}$. 
But contracting $\text{Ref}(g')^a$ with $v_a$ we find:
\begin{equation}\label{eq:contraction}
v_a\,\text{Ref}(g')^a =v^e\, g_{ae}\,{R'}^a_{\ bcd}\,v^b v^d r^c=v^e\,g_{ae}\,g'^{fa}\,{R'}_{fbcd}\,v^b v^d r^c.
\end{equation}
Here the object $g_{ae}{R'}^a_{\ bcd}$ is \emph{not} $R'_{ebcd}=g'_{ae}{R'}^a_{\ bcd}$: the antisymmetry $R'_{ebcd}=-R'_{becd}$ holds for indices lowered by $g'$, not by $g$. The mixed tensor $g_{ae}{R'}^a_{\ bcd}$ has no particular symmetry in $e$ and $b$, and its contraction with the symmetric $v^e v^b$ need not vanish. In other words, in \eqref{eq:contraction} we have $g_{ae}\,g'^{fa}\neq \delta_e^f$. So
\begin{equation}
v_a\,\text{Ref}(g')^a \neq 0.
\end{equation}
For some observers, then, the reference force acquires a timelike component, and the difference vector $D^a = \text{Dev}(g)^a - \text{Ref}(g')^a$ lives in the full Lorentzian tangent space. This is fatal: in Lorentzian geometry, a non-zero vector can have vanishing norm. If $D^a$ is null,
\begin{equation}
\|D\|^2 = g_{ab}D^a D^b = 0
\quad\text{even though}\quad
D^a\neq 0.
\end{equation}
The $\delta$-bound \eqref{eq:delta_sim} is then satisfied as $\delta\to 0$ by an infinite class of spacetimes whose tidal forces differ from those of $g'$ by a null vector, including certain gravitational-wave perturbations. The criterion characterises an entire equivalence class and so fails to single out a privileged geometry.

\paragraph{The auxiliary metric as a measure.}\label{par:auxiliary}
One might try to bypass the null degeneracy by replacing the Lorentzian norm with a positive-definite inner product: pick some Riemannian metric $h_{ab}$ on $M$ and define $\|D\|_h^2 := h_{ab}D^a D^b$. This is in the spirit of \cite{WeatherallFletcher}'s framework for first-order approximation, where a Riemannian background does work that the Lorentzian metric cannot. Positive-definite norms have no null vectors, so the degeneracy disappears. However, as we will see, the problem lies instead in the dependence of the resulting criterion on the choice of $h$.

 We want to measure the norm of spacetime vectors that live on $(M,g)$, and $g$ is the natural choice. But auxiliary structure is not objectionable in itself. Mathematics is full of constructions that use an arbitrary auxiliary choice to define something that is independent of it. The standard topology on the space of sections of a vector bundle, for instance, is defined via norms built from an auxiliary fibre metric and connection, and the resulting topology is the same whatever the choice. An auxiliary $h$ might  be argued to be equally innocuous here if the verdicts of the similarity criterion did not depend on the choice. But the dependence is ineliminable.

To see that, we can existentially quantify on the choice of $h$, first, 'there exists one', and then 'for all'. Suppose first that $(M,g)$ being $\delta$-similar to $g'$  requires the bound $\|D\|_h < \delta$ to hold on the tube for \emph{some} Riemannian $h$. Rescaling $h$ by a constant $c^2$ rescales every $h$-norm by $c$. So wherever the supremum of $\|D\|_h$ over a given tube of fixed radius is finite, choosing $c$ small enough brings $\|D\|_h$ below any $\delta$ (and any two Riemannian norms are uniformly equivalent over a compact tube). Every spacetime whose tidal mismatch with $g'$ is merely bounded on the tube then counts as $\delta$-similar to $g'$, for every $\delta$ and every such $g'$. Suppose instead that the bound must hold for \emph{all} Riemannian $h$. Rescaling upward now violates any bound wherever $D^a \neq 0$, so the condition forces $D^a = 0$ exactly, and $\delta$ has dropped out. So, 'there exists one $h$' trivialises the criterion, and `for all $h$' turns it into a criterion of exact coincidence rather than of approximation. %So a nontrivial criterion requires a particular $h$, fixed in advance.

Suppose now that we take these counterexamples as spurious because we think we can somehow normalise the overall scale of $h$. So fix some particular $h$: I will show that the verdicts depend on that choice.  But again, two auxiliary metrics can disagree about \emph{comparative} verdicts. Suppose the difference vectors of two laboratory situations, $D_1$ and $D_2$, point along different directions, with $D_1$ twice as long as $D_2$ in the norm of $h$. A second metric $\tilde{h}$ that stretches the direction of $D_2$ fourfold reverses the ordering. More generally: two auxiliary metrics that differ anisotropically can disagree about which of two situations is closer in tidal action to $g'$, irrespective of mere rescalings. Thus the dependence on $h$ is vicious in an important sense: at fixed $\delta$ and $\rho$, and even merely comparatively, different choices of $h$ can deliver different verdicts about $\delta$-similarity.

Nonetheless, some $h$-independent content does remain. For every Riemannian $h$, the limit $\|D\|_h \to 0$ forces $D^a = 0$. The singular limits of all the $h$-criteria therefore agree, just as the auxiliary choices in the vector-bundle construction agree about the topology.\footnote{Allowing the tube radius $\rho$ to vary with $h$ doesn't change anything: uniform equivalence of Riemannian norms on compact tubes means that the $h$-criteria agree about which quantities vanish in the limit, and that is all.} The singular limit is exact coincidence of tidal acceleration, but $\delta$-flatness is useful precisely away from that limit. Every spacetime is $\delta$-flat for some tube radius $\rho$, and the role of the criterion is to grade approximate flatness at finite $\delta$, telling us which laboratories, and at which scales, may treat their physics as special-relativistic to within $\delta$. So the $h$-criteria agree with one another only in a regime where no interesting relative approximation is left.

I should emphasise that the Lorentzian norm $\|\cdot\|_g$ is not on a par with these choices. There is exactly one physical metric. And the number $\|\text{Dev}(g)\|_g$ has units and can be operationally read as proper distance per proper time squared: the magnitude of tidal acceleration experienced (speaking loosely) by  the inertial observer. In plain terms: clocks and rods measure this magnitude of tidal acceleration, whereas they do not measure its $h$-magnitude.\footnote{This operational contrast presupposes the chronogeometric reading of $g$ flagged in the introduction.}

\paragraph{A remark on the equivalence principle.}\label{par:equiv}
One might invoke the Equivalence Principle here in the following way: for any point $p$ and geodesic $\gamma$, a local diffeomorphism can have $g$ and $g'$ agree to first order (matching connection coefficients), so that the $\delta$-bound is satisfied for sufficiently small deviation vectors. But of course matching connections does not match curvatures: the Riemann tensor depends on \emph{second} derivatives of the metric, so even when $\Gamma \approx \Gamma'$, generally $R_{abcd} \neq R'_{abcd}$. The tidal forces differ by $\mathcal{O}(1)$, not $\mathcal{O}(\delta)$. It is true that tidal acceleration scales linearly with $r$, so shrinking the laboratory would eventually overcome any curvature mismatch. But the $\delta$-flatness criterion takes this into account: it fixes the tube radius $\rho > 0$ in advance and imposes the bound throughout the tube, while the Equivalence Principle guarantees agreement only at a point, to first order. %The contrast shows up in the singular limit. At fixed $\rho$, the bound \eqref{eq:delta_flat} survives $\delta \to 0$ exactly when curvature vanishes throughout the tube, a condition that some spacetimes satisfy. The comparison bound \eqref{eq:delta_sim} with a curved reference survives $\delta \to 0$ at fixed $\rho$ only if the tidal action of $g$ matches that of $g'$ exactly throughout the tube, and first-order matching at a point provides no route to exact matching on a tube. The point here is not the trivial one, that Minkowski happens to be globally flat. Rather, a fixed-radius criterion discriminates among geometries only through conditions that can hold exactly on the tube, and the Equivalence Principle provides no such condition.

\section[Modifying the Criterion: delta-MSS]{Modifying the Criterion: \texorpdfstring{$\delta$}{delta}-MSS}\label{sec:homogeneous}

Section~\ref{sec:obstruction} closed off comparisons with any fixed reference metric that fails to share the lightcones of $g$. What remains is the family of criterion modifications that I am surveying in this paper: inequalities built from the geometry of the physical spacetime alone. (The conformal construction, which escapes the obstruction, is taken up at the end of this section, since its analysis uses the results developed here.) The class of such modifications of my criterion has no principled enclosure. One could bound a curvature scalar rather than a vectorial deviation, abandon the tubular-neighborhood framing, or replace the apparatus altogether. The space of admissible criteria is clearly unbounded, and so this paper does not attempt to chart it. Here I will only investigate what the minimal extensions of $\delta$-flatness looks like. As I said, with this restricted purview there is one main strategy, and a restricted version of it. 

The main strategy is to restrict attention to inequalities of the form $\|\text{Dev}(g)^a - f^a\|_g < \delta$, with $f^a$ a vector built from the metric. This leaves the left-hand side untouched (the magnitude an inertial observer reads off from two test particles) and changes only the target of comparison. Any other choice departs from $\delta$-flatness's original interpretation.

Within that template, we can take $f^a = \Lambda\, r^a$: a single scalar parameter, with no commitment to any decomposition of the Riemann tensor.  The result is (moving indices down for convenience):
\begin{equation}
\label{eq:delta_mss}
\| \text{Dev}(g)_a - \Lambda\, r_a \|_g=\|\left(R_{abcd}- \Lambda\,(g_{ac}\,g_{bd} - g_{ad}\,g_{bc})\right)\, v^b v^d r^c\| < \delta.
\end{equation}
Since $r^a$ is $g$-orthogonal to $v^a$, so is $\Lambda r^a$. The difference vector lies in the spacelike subspace where the Lorentzian norm is positive-definite, and the null degeneracy of \S\ref{sec:obstruction} does not arise. No fixed reference metric appears: the kinematics $(v,r)$, the orthogonality $v \perp r$, and the norm are all supplied by $g$, and $\Lambda$ is a scalar parameter, that appears as a prefactor of a function of the argument metric $g$. Call this criterion $\delta$-MSS.

In the singular limit $\delta \to 0$, \eqref{eq:delta_mss} forces
\begin{equation}
R^a{}_{bcd}\, v^b v^d r^c = \Lambda\, r^a
\end{equation}
for every $g$-unit-timelike $v$ and every $r$ orthogonal to $v$. By the same fact deployed in \S\ref{sec:mechanism}, that sectional curvature on two-planes determines the Riemann tensor, this forces
\begin{equation}
\label{eq:isotropic-Riem}
R_{abcd} = \Lambda\,(g_{ac}\,g_{bd} - g_{ad}\,g_{bc})
\end{equation}
pointwise: the Riemann tensor is isotropic at every point with sectional curvature $\Lambda$.

Moving from the more restricted to the more general case, with a more general $f^a$ vector, we must still guarantee that it is strictly spacelike and built from the metric. Again, the only way to guarantee this is to have $f^a$ proportional to $r^a$. That is, we can ask the question of whether $\Lambda$ in \eqref{eq:delta_mss} can be promoted to a spacetime function $f(x)$, as a functional of the metric. Of course, orthogonality of $f(x)\,r^a$ to $v^a$ is preserved, so the null degeneracy is again avoided, but that is not enough. The singular limit $\delta \to 0$ would force \eqref{eq:isotropic-Riem} with $f(x)$ in place of $\Lambda$, and for $n \geq 3$ the contracted second Bianchi identity then forces $\nabla_b f = 0$. This is Schur's theorem:\footnote{Due to the geometer Friedrich Schur (1886), not Issai Schur of representation theory.} a connected Lorentzian (or Riemannian) manifold of dimension $n \geq 3$ that is isotropic at every point has constant sectional curvature. So, in order to have a singular limit $f$ must be everywhere constant, and the $\delta \to 0$ saturating geometries of \eqref{eq:delta_mss} are exactly the constant-sectional-curvature spacetimes: Minkowski ($\Lambda = 0$), de Sitter ($\Lambda > 0$), and Anti-de Sitter ($\Lambda < 0$).

\paragraph{Not a comparison.}
$\delta$-MSS might look like the special case of $\delta$-similarity in which $g'$ is itself maximally symmetric. But I think it is a better classified as an alternative criterion, for the following reasons. For an MSS reference $g'_\Lambda$ with sectional curvature $\Lambda$, the algebraic Ref-construction of \S\ref{sec:obstruction} gives
\begin{equation}
\text{Ref}(g'_\Lambda)^a = \Lambda\,\big[\,g'_\Lambda(v,v)\,r^a - g'_\Lambda(v,r)\,v^a\,\big],
\end{equation}
\emph{not} $\Lambda\,r^a$. The $g'_\Lambda$-inner products are evaluated on physical kinematic objects: for instance, $v$ is $g$-unit-timelike, not necessarily $g'_\Lambda$-unit-timelike. For these primed inner products to nonetheless give the required results, $g'_\Lambda$ would have to agree pointwise with $g$ on the relevant contractions. But $g'_\Lambda$ is fixed and $g$ is arbitrary: no single MSS metric can get the right inner-product structure for every physical $g$. The conclusion is that $\delta$-MSS is therefore not a comparison-style criterion but a $g$-internal criterion, a modification of the tidal force criterion with $\Lambda$ as a scalar parameter and the MSS family appearing only as the family of singular-limit solutions, labelled by $\Lambda$.
 Different values of $\Lambda$ therefore define different criteria. 

\paragraph{The conformal criterion.}\label{par:conformal-mod}
One other relatively simple modified criterion escapes the obstruction of \S\ref{sec:obstruction} and, because so simple, can be treated within this paper.\footnote{Suggested in passing by \citet[\S 2.1.2, p.~7]{LinnemannReadPuzzle} as a way around the null degeneracy of \S\ref{sec:obstruction}.} The ideas is to restrict the reference $g'$ to share $g$'s lightcones: $g'_{ab} = \Omega^2 g_{ab}$, with $\Omega(x) > 0$. Then define $\delta_c$-flatness by
\begin{equation}
\label{eq:delta_c}
\|\text{Dev}(g)^a - \text{Dev}(g')^a\|_g < \delta.
\end{equation}
Because the two metrics share their lightcones, both deviation terms are $g$-spacelike and the norm is positive-definite. The null degeneracy does not arise, so the question becomes whether the surviving criterion singles out a privileged non-Minkowski geometry. The construction may look like a comparison with a reference metric, but again, I think that is a bad fit, for the same reason it was a bad fit for the $\delta$-MSS. The criterion is indexed by the scalar $\Omega$, and once $\Omega$ is fixed the comparator $\Omega^2 g$ varies with the physical metric: it does not involve a fixed reference metric. Accordingly, the analysis below and in Appendix~\ref{app:conformal-saturation} holds $\Omega$ fixed and asks which metrics saturate the bound.

The singular limit $\delta\to 0$ splits into three cases. If $\Omega = 1$ identically, $\text{Dev}(g')^a = \text{Dev}(g)^a$ and the condition is empty. If $\Omega \neq 1$ is constant, the Christoffel symbols are invariant (${R'}^a{}_{bcd} = R^a{}_{bcd}$). Renormalising the four-velocity to $v'^a = \Omega^{-1}v^a$ gives $\text{Dev}(g')^a = \Omega^{-2}\text{Dev}(g)^a$, so $\text{Dev}(g)^a - \text{Dev}(g')^a = (1 - \Omega^{-2})\text{Dev}(g)^a$ and saturation demands $\text{Dev}(g)^a = 0$. In this case we have reinvented $\delta$-flatness. On the other hand, if $\Omega(x)$ varies, the analysis branches according to whether the gradient $\nabla\Omega$ is null (Appendix~\ref{app:conformal-saturation}). In the \emph{non-null branch}, integrability conditions combined with the saturation-forced conformal flatness of $g$ force $g$ to be locally maximally symmetric, so the construction collapses to $\delta$-MSS. In the \emph{null branch}, $g$ must be conformally flat with pure-radiation Ricci aligned along $\nabla\Omega$, and explicit local families of saturating metrics exist for fixed $\Omega$. So here uniqueness fails. The conformal comparison therefore either collapses to $\delta$-MSS or fails to single out a unique geometry.

\section{Conclusion}\label{sec:conclusion}

\citet{WeatherallFletcher} argued that there is no substantive sense in which flat spacetime plays a privileged role in General Relativity: to first order, every Lorentzian metric locally approximates every other. \citet{LinnemannReadPuzzle} observe that even more sophisticated defences, such as the coordinate-free bitensor expansions of \citet{Teh2024}, still rely on particular coordinate frames, or on picking some Minkowski metric to expand around. In earlier work \citep{Gomes_flat} I proposed $\delta$-flatness as a way around the worry. The criterion bounds the relative acceleration of neighbouring geodesics, and in the limit $\delta \to 0$ only Minkowski satisfies the inequality. Because the criterion names no flat metric, Linnemann and Read's base-point question never arises.

This paper asked whether $\delta$-flatness extends to other reference geometries. There are two ways one might try to cash such extensions out. The first is to compare the tidal acceleration in $(M,g)$ with the tidal acceleration that would arise in some reference spacetime $(M,g')$, asking whether the difference is small. As we saw, Lorentzian signature blocks this for every reference that doesn't share the lightcones of $g$: the difference between the two tidal forces is not spacelike, so it can have zero norm without vanishing, and infinitely many distinct spacetimes saturate the bound.

The second attempt to extend $\delta$-flatness is to drop reference metrics altogether and build the criterion from the geometry of $g$ alone. The space of such criteria has no principled limit: any inequality that we could write down from $g$ and its curvature tensors would be a candidate, and reverse-engineering the inequality that would single out a given $g'$ looks insurmountable. Thus in this paper I do not even consider surveying this space. I looked only at the modifications closest to $\delta$-flatness, in which one bounds the deviation of tidal acceleration from a target vector built from the kinematic data. The minimal candidate is $\delta$-MSS, with target $\Lambda r^a$, a single constant times the deviation vector. As $\delta \to 0$, the Riemann tensor must be pointwise isotropic. Schur's theorem then implies that the saturating geometries are Minkowski, de Sitter, and Anti-de Sitter, one for each value of $\Lambda$. A more generally conformally rescaled reference also evades the degeneracy but does not deliver a new privileged geometry: it either collapses to $\delta$-MSS or admits families of saturating metrics.

To sum up: I have argued that $\delta$-flatness is a substantive criterion for a generic spacetime to approximate a flat spacetime. For it, nothing is put in by hand, and Minkowski emerges as the unique spacetime that saturates the inequality bound. $\delta$-MSS is weaker: one must supply $\Lambda$ independently of the physical metric, and what is then picked out by saturating the bound is a maximally symmetric spacetime. Other criteria, built from different physical and mathematical tools, might pick out other geometries, but constructing such a criterion would be a separate project, which would also require its own motivations.

A natural next step is to recast this rigidity in the language of Cartan geometries. Maximally symmetric spacetimes are the homogeneous model spaces $G/H$ underlying Cartan geometries, and the $\delta$-criterion might be read as a bound on Cartan curvature measuring departure from the model \citep{Sharpe2000,Wise2009}.

\subsection*{Acknowledgments}
I would like to thank James Read, Oliver Pooley, Niels Linnemann, and Jim Weatherall for many conversations on this topic. The author gratefully acknowledges funding from the European Research Council, Grant 101088528 COGY.

\appendix

\section{Rigidity and Non-Uniqueness for the Conformal Saturation Condition}
\label{app:conformal-saturation}

This appendix collects the mathematical results used in the analysis of the conformal comparison (\S\ref{sec:homogeneous}). The aim is to prove two claims:

\begin{enumerate}
\item In the non-null-gradient branch, the conformal saturation condition forces the spacetime to be locally maximally symmetric.
\item In the null-gradient branch, the same condition admits local families of saturating metrics (for fixed metric-independent conformal factor), so no general uniqueness result is available.
\end{enumerate}

Throughout, $d=4$, the signature is Lorentzian, and indices are lowered/raised using the physical metric $g$. We write $g'_{ab}=e^{2\phi}g_{ab}$, $\Omega=e^\phi>0$.

\subsection{Setup and conventions}
\label{app:conformal-setup}

Let $R_{abcd}$ and $R'_{abcd}$ denote the Riemann tensors of $g$ and $g'$, respectively, with all indices lowered using $g$. The conformal saturation condition is
\begin{equation}
\label{eq:app-saturation}
R'_{abcd}=e^{2\phi}R_{abcd}.
\end{equation}

In $4$ dimensions, decompose the Riemann tensor as
\begin{equation}
\label{eq:app-riemann-decomp}
R_{abcd}=W_{abcd}+(g\owedge P)_{abcd},
\end{equation}
where $W_{abcd}$ is the Weyl tensor, $P_{ab}$ is the Schouten tensor,
\[
P_{ab}:=\frac12\left(R_{ab}-\frac16 R\,g_{ab}\right),
\]
and $(g\owedge P)_{abcd}$ is the Kulkarni--Nomizu product.

The conformal transformation law for the Schouten tensor is:
\begin{equation}
\label{eq:app-schouten-transform}
P'_{ab}
=
P_{ab}
-\nabla_a\nabla_b\phi
+\nabla_a\phi\,\nabla_b\phi
-\frac12(\nabla\phi)^2g_{ab}.
\end{equation}

\subsection{First reduction: Weyl vanishing on the nontrivial region}
\label{app:weyl-reduction}

Define the \emph{nontrivial region} $U:=\{x\in M: e^{2\phi(x)}\neq 1\}$.

\begin{lemma}[Weyl vanishing on $U$]
\label{lem:app-weyl}
If \eqref{eq:app-saturation} holds on an open set, then on $U$, $W_{abcd}=0$. Hence $g$ is conformally flat on $U$.
\end{lemma}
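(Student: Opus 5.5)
The plan is to use the conformal invariance of the Weyl tensor and compare Weyl parts on both sides of \eqref{eq:app-saturation}. Recall that under $g'_{ab}=e^{2\phi}g_{ab}$ the Weyl tensor with one index up is invariant, $W'^a{}_{bcd}=W^a{}_{bcd}$. The paper's convention lowers all indices of $R'$ with $g$, so in that convention $W'_{abcd}=g_{ae}W'^e{}_{bcd}=W_{abcd}$. The convention must be checked carefully here. If instead one lowers with $g'$, the invariance reads $W'_{abcd}=e^{2\phi}W_{abcd}$, and the argument below runs identically with the factors shifted. The key point is that the two sides of \eqref{eq:app-saturation} scale the Weyl part differently, and the mismatch is exactly $e^{2\phi}\neq 1$.

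The first step is to decompose both sides by \eqref{eq:app-riemann-decomp}. The left side is $R'_{abcd}=W'_{abcd}+(g'\owedge P')_{abcd}$. With $g$-lowering this equals $W_{abcd}+e^{2\phi}(g\owedge P')_{abcd}$, up to the precise placement of conformal factors in the Kulkarni--Nomizu term, which I will track. The right side is $e^{2\phi}W_{abcd}+e^{2\phi}(g\owedge P)_{abcd}$.

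The second step is to take the totally trace-free part with respect to $g$, that is, to project onto the Weyl-type component of the algebraic curvature tensors. Both $W$ and $W'$ are $g$-trace-free, since $g$ and $g'$ differ only by a scalar factor. Every term of the form $g\owedge S$ with $S$ symmetric lies in the complementary, trace-determined subspace. The Ricci-type part is exactly the image of $S\mapsto g\owedge S$, and this decomposition is orthogonal and pointwise. The projection therefore gives $W_{abcd}=e^{2\phi}W_{abcd}$, or $e^{2\phi}W=e^{4\phi}W$ in the other convention. Either way $(e^{2\phi}-1)W_{abcd}=0$ pointwise, up to a nonzero factor.

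The third step is to restrict to $U$, where $e^{2\phi}\neq1$ and so $W_{abcd}=0$. Since $d=4\ge 4$, vanishing Weyl tensor is equivalent to local conformal flatness, which gives the second claim on the open set $U$. Note that $U$ is open by continuity of $\phi$. Being clear about which metric is used for index lowering is the main place where errors could creep in. The substantive input is only the conformal invariance of $W^a{}_{bcd}$ together with the uniqueness of the Weyl/Ricci decomposition, both standard.
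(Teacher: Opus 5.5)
Your proposal is correct and follows the paper's argument: decompose both sides of \eqref{eq:app-saturation}, use the conformal invariance of $W^a{}_{bcd}$, and compare Weyl parts to obtain $(1-e^{2\phi})W_{abcd}=0$ on $U$. Your convention check is in fact cleaner than the paper's one-line proof: that proof quotes the $g'$-lowered law $W'_{abcd}=e^{2\phi}W_{abcd}$ even though \eqref{eq:app-saturation} is $g$-lowered, and in that convention $W'_{abcd}=W_{abcd}$, as you say; the extra $e^{2\phi}$ you attach to the Kulkarni--Nomizu term, which should be absent under $g$-lowering, is harmless because the Weyl projection discards that term.
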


\begin{proof}
Under $g'_{ab}=e^{2\phi}g_{ab}$, the Weyl tensor with all indices lowered transforms as $W'_{abcd}=e^{2\phi}W_{abcd}$. Decomposing both sides of \eqref{eq:app-saturation} and comparing the Weyl parts yields $(1-e^{2\phi})W_{abcd}=0$. Thus $W_{abcd}=0$ wherever $e^{2\phi}\neq 1$.
\end{proof}

\subsection{Second reduction: the Schouten PDE}
\label{app:schouten-pde}

On $U$, Lemma~\ref{lem:app-weyl} gives $W=0$. Hence the saturation condition reduces to a condition on the Schouten tensor.

\begin{lemma}[Schouten PDE]
\label{lem:app-schouten-pde}
On $U$, \eqref{eq:app-saturation} implies $P'_{ab}=e^{2\phi}P_{ab}$, and therefore
\begin{equation}
\label{eq:app-schouten-pde}
\nabla_a\nabla_b\phi
=
(1-e^{2\phi})P_{ab}
+\nabla_a\phi\,\nabla_b\phi
-\frac12(\nabla\phi)^2g_{ab}.
\end{equation}
\end{lemma}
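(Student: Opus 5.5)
The argument decomposes both sides of the saturation condition \eqref{eq:app-saturation} into Weyl and Schouten parts using \eqref{eq:app-riemann-decomp}, and then compares the Schouten parts. Under $g'=e^{2\phi}g$, lowering all indices of $R'$ with $g$ gives
\[
R'_{abcd}=e^{2\phi}\big(W'_{abcd}+(g'\owedge P')_{abcd}\big),
\]
where $W'$ denotes the Weyl tensor of $g'$ with indices lowered by $g$. Here $W'_{abcd}=e^{2\phi}W_{abcd}$ in the convention of Lemma~\ref{lem:app-weyl}. The Kulkarni--Nomizu term is computed with $g'=e^{2\phi}g$ and the Schouten tensor $P'_{ab}$ of $g'$, which is a $(0,2)$ tensor and needs no index raising. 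Since $g'\owedge P'=e^{2\phi}\,g\owedge P'$, we get $R'_{abcd}=e^{2\phi}W_{abcd}+e^{2\phi}(g\owedge P')_{abcd}$, up to the normalisation already fixed in Lemma~\ref{lem:app-weyl}. The first step is to pin this bookkeeping down consistently with that lemma's convention, so that the Weyl pieces cancel exactly as they do there.

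On $U$ Lemma~\ref{lem:app-weyl} gives $W=0$, so \eqref{eq:app-saturation} reduces to
\[
e^{2\phi}(g\owedge P')_{abcd}=e^{2\phi}(g\owedge P)_{abcd}\cdot(\text{appropriate factor}),
\]
that is, $g\owedge\big(P'-e^{2\phi}P\big)=0$ once the factors are matched as in the statement. The key algebraic fact I will invoke is that in dimension $d\ge 3$ the map $S\mapsto g\owedge S$ on symmetric $(0,2)$ tensors is injective. To see this, contract $g\owedge S$ twice with $g$. This gives $(d-2)S_{ab}+(\operatorname{tr}S)g_{ab}$. Tracing that result gives $2(d-1)\operatorname{tr}S$, so $\operatorname{tr}S=0$, and then $S=0$. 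Applying this to $S=P'-e^{2\phi}P$ yields $P'_{ab}=e^{2\phi}P_{ab}$.

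Finally, I substitute the conformal transformation law \eqref{eq:app-schouten-transform} into $P'_{ab}=e^{2\phi}P_{ab}$ and solve for $\nabla_a\nabla_b\phi$. This gives
\[
\nabla_a\nabla_b\phi=P_{ab}-e^{2\phi}P_{ab}+\nabla_a\phi\nabla_b\phi-\tfrac12(\nabla\phi)^2g_{ab},
\]
which is exactly \eqref{eq:app-schouten-pde}.

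The main obstacle is the first step: making the conformal weights consistent. It has to be clear whether the Kulkarni--Nomizu product for $g'$ carries $g'$ or $g$, and what overall power of $e^{2\phi}$ arises from lowering the index of $R'^a{}_{bcd}$ with $g$ rather than $g'$. Only with this settled does the Schouten comparison come out as $P'=e^{2\phi}P$, rather than $P'=P$ or $P'=e^{4\phi}P$. I would fix this by checking against the constant-$\Omega$ case discussed in \S\ref{sec:homogeneous}, where $R'^a{}_{bcd}=R^a{}_{bcd}$, and adopting whichever lowering convention makes Lemma~\ref{lem:app-weyl}'s statement $W'=e^{2\phi}W$ hold.
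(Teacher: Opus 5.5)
\textbf{There is a genuine gap.} Your outline (remove the Weyl part, invert $S\mapsto g\owedge S$, substitute \eqref{eq:app-schouten-transform}) is the route the paper takes implicitly, and your injectivity argument and final substitution are correct. But the first assertion of the lemma is precisely the weight in $P'_{ab}=e^{2\phi}P_{ab}$, and you never derive it. You insert an ``appropriate factor'' chosen to match the statement, which assumes the conclusion.

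The formulas you actually write point the other way. Lowering $R'^a{}_{bcd}$ with $g$ rather than $g'$ multiplies by $e^{-2\phi}$, not $e^{2\phi}$. Your first display and your second ($R'_{abcd}=e^{2\phi}W_{abcd}+e^{2\phi}(g\owedge P')_{abcd}$) are mutually inconsistent. Worse, the second, set equal to $e^{2\phi}R_{abcd}=e^{2\phi}W_{abcd}+e^{2\phi}(g\owedge P)_{abcd}$, yields $P'=P$ and no constraint on $W$ at all. Your two proposed repairs also conflict. The constant-$\Omega$ check ($R'_{abcd}=R_{abcd}$, $P'=P$) falsifies that formula and shows that the $g$-lowered Weyl tensor of $g'$ is $W_{abcd}$ itself. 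By contrast, adopting the convention in which $W'_{abcd}=e^{2\phi}W_{abcd}$ (the $g'$-lowered law quoted in the proof of Lemma~\ref{lem:app-weyl}), while keeping \eqref{eq:app-saturation} as written, is exactly what produces the wrong answer $P'=P$.

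Done consistently with the appendix's convention (all indices lowered by $g$), the bookkeeping is one line. Since $W^a{}_{bcd}$ is conformally invariant and $g'\owedge P'=e^{2\phi}\,g\owedge P'$,
\[
R'_{abcd}=g_{ae}R'^e{}_{bcd}=e^{-2\phi}\big(e^{2\phi}W_{abcd}+(g'\owedge P')_{abcd}\big)=W_{abcd}+(g\owedge P')_{abcd}.
\]
So \eqref{eq:app-saturation} reads $W+g\owedge P'=e^{2\phi}W+e^{2\phi}\,g\owedge P$. Uniqueness of the splitting into a $g$-tracefree part and a Kulkarni--Nomizu part then gives both $(1-e^{2\phi})W=0$ and, by your injectivity argument, $P'=e^{2\phi}P$. (It is this identity, not $W'=e^{2\phi}W$, that drives Lemma~\ref{lem:app-weyl}.)

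A simpler route: contracting \eqref{eq:app-saturation} with $g^{ac}$ gives $R'_{bd}=e^{2\phi}R_{bd}$ for the Ricci tensors. Hence the scalar curvature of $g'$ is $g'^{bd}R'_{bd}=R$, and $P'_{ab}=\tfrac12\big(R'_{ab}-\tfrac16 R\,g'_{ab}\big)=e^{2\phi}P_{ab}$. This needs neither $W=0$ nor injectivity, and holds wherever \eqref{eq:app-saturation} does, not only on $U$. Your last step then gives \eqref{eq:app-schouten-pde}.
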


Introduce the notation $u_a:=\nabla_a\phi$, $u^2:=u_a u^a$, $\lambda:=1-e^{2\phi}$. Then \eqref{eq:app-schouten-pde} becomes
\begin{equation}
\label{eq:app-u-pde}
\nabla_a u_b=\lambda P_{ab}+u_a u_b-\frac12 u^2 g_{ab}.
\end{equation}

\subsection{Integrability of the Schouten PDE}
\label{app:integrability}

The key rigidity comes from an integrability condition for \eqref{eq:app-u-pde}.

\begin{prop}[Algebraic integrability identity]
\label{prop:app-integrability}
Assume \eqref{eq:app-saturation} on an open set, and restrict to $U$. Then
\begin{equation}
\label{eq:app-key-algebraic}
u_aP_{cb}-u_cP_{ab}-g_{cb}P_{ae}u^e+g_{ab}P_{ce}u^e=0.
\end{equation}
\end{prop}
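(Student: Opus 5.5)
The plan is to derive \eqref{eq:app-key-algebraic} as the integrability condition of the first-order system \eqref{eq:app-u-pde} for the gradient $u_a=\nabla_a\phi$. I would differentiate \eqref{eq:app-u-pde} once, antisymmetrise the two derivative indices, and compare the result with the Ricci identity
\[
\nabla_c\nabla_a u_b-\nabla_a\nabla_c u_b=-R^e{}_{bca}u_e
\]
(with the sign fixed by the paper's convention in \eqref{eq:geodesic_dev}). On $U$ Lemma~\ref{lem:app-weyl} gives $W=0$, so by \eqref{eq:app-riemann-decomp} the Riemann tensor is the pure Kulkarni--Nomizu product $g\owedge P$. The curvature side is then already linear in $P$ and $u$, and it is a combination of exactly the four terms $u_aP_{cb}$, $u_cP_{ab}$, $g_{cb}P_{ae}u^e$ and $g_{ab}P_{ce}u^e$ appearing in \eqref{eq:app-key-algebraic}.

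For the left-hand side I would expand $\nabla_c(\lambda P_{ab}+u_au_b-\tfrac12u^2g_{ab})$ and antisymmetrise in $c,a$, handling the pieces one at a time.
\begin{enumerate}
\item The term $\lambda\nabla_cP_{ab}$ antisymmetrises to $\lambda$ times the Cotton tensor. The Cotton tensor vanishes because $W=0$ in $d=4$: the contracted Bianchi identity gives $\nabla^dW_{abcd}\propto C_{abc}$.
\item Since $\nabla\lambda=-2e^{2\phi}u=2(\lambda-1)u$, the derivative of the coefficient contributes $2(\lambda-1)(u_cP_{ab}-u_aP_{cb})$.
\item The term $(\nabla_cu_a)u_b$ drops out, because $\nabla_au_b$ is a Hessian and hence symmetric.
\item In the remaining terms $u_a\nabla_cu_b-u_c\nabla_au_b$ and $-u^e\nabla_cu_e\,g_{ab}+u^e\nabla_au_e\,g_{cb}$, I would substitute \eqref{eq:app-u-pde} back in.
\end{enumerate}
After this substitution the cubic terms $u_au_cu_b$ cancel identically. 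The $u^2u$ terms also cancel between the two sources: one group gives $+\tfrac12u^2(u_cg_{ab}-u_ag_{cb})$, and the other gives $-u^2(\dots)+\tfrac12u^2(\dots)$ of the same structure. What survives is $\lambda$-weighted copies of the four bracket terms.

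Collecting everything, both sides should be multiples of the single tensor
\[
B_{cab}:=u_aP_{cb}-u_cP_{ab}-g_{cb}P_{ae}u^e+g_{ab}P_{ce}u^e .
\]
Their difference should be $\kappa(\phi)\,B_{cab}=0$, with the coefficient built from the contributions $-2(\lambda-1)$, $+\lambda$, and the $\pm1$ coming from the curvature term. I expect $\kappa$ to reduce to a nonvanishing multiple such as $\pm e^{2\phi}$ or $\pm(1+e^{2\phi})$, either of which is strictly positive in magnitude. Dividing by $\kappa$ then gives \eqref{eq:app-key-algebraic}.

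The main obstacle is bookkeeping of signs and factors. I would need to fix the Riemann sign convention consistently with \eqref{eq:geodesic_dev}, and the normalisation of $\owedge$ consistently with \eqref{eq:app-riemann-decomp}, so that $(g\owedge P)_{ebca}u^e$ reproduces $B$ with the correct coefficient. I would then need to verify that $\kappa$ does not vanish identically. If a convention mismatch made the curvature term cancel exactly against the $\lambda$-terms, the identity would become empty. To guard against this, I would first check the computation in the simple test case of constant $\phi$ (where $u=0$) and on a de Sitter background, and only then carry out the general expansion.
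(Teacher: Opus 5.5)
Your proposal is correct and follows the paper's own proof: differentiate \eqref{eq:app-u-pde}, antisymmetrise, drop the Cotton term via $W=0$ in $d=4$, and compare with the Ricci identity for $R=g\owedge P$. Write $X_{cab}=u_aP_{cb}-u_cP_{ab}$ and $Y_{cab}=g_{ab}P_{ce}u^e-g_{cb}P_{ae}u^e$, so that $B=X+Y$. Once the Hessian, cubic and $u^2u$ terms drop out, the derivative side is $(1+e^{2\phi})X-(1-e^{2\phi})Y$ and the curvature side is $X-Y$. Their difference is exactly $e^{2\phi}B$, which confirms your expected $\kappa=e^{2\phi}\neq 0$.

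Two small corrections. First, neither side is separately a multiple of $B$; only their difference is. Second, the Ricci-identity sign you wrote is the right one: it agrees with the paper's $R_{cab}{}^{d}u_d$ and with the conventions behind \eqref{eq:app-riemann-decomp} and \eqref{eq:app-schouten-transform}. However, it does not follow from \eqref{eq:geodesic_dev} as printed, which implies the opposite sign. Taking the sign from there would leave $(2+e^{2\phi})X-(2-e^{2\phi})Y$, which is not proportional to $B$.
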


\begin{proof}
Antisymmetrize $\nabla_c(\cdot)$ and $\nabla_a(\cdot)$ in \eqref{eq:app-u-pde} and use the Ricci identity $(\nabla_c\nabla_a-\nabla_a\nabla_c)u_b = R_{cab}{}^{d}u_d$. On $U$, $W=0$ implies the Cotton tensor vanishes: $\nabla_cP_{ab}-\nabla_aP_{cb}=0$. Using the Schouten form of the Riemann tensor and simplifying yields \eqref{eq:app-key-algebraic}.
\end{proof}

\subsection{The non-null branch: local maximal symmetry}
\label{app:nonnull-branch}

\begin{thm}[Rigidity in the non-null branch]
\label{thm:app-nonnull-rigidity}
Let $U_*:=\{x\in U: u^2(x)\neq 0\}$. If \eqref{eq:app-saturation} holds on an open set, then on each connected component of $U_*$, the metric $g$ is locally maximally symmetric.
\end{thm}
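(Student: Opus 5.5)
We work on a connected component of $U_*$. There $W=0$ by Lemma~\ref{lem:app-weyl}, and the algebraic identity \eqref{eq:app-key-algebraic} of Proposition~\ref{prop:app-integrability} holds. The aim is to show that $P_{ab}$ is pure trace, $P_{ab}=\mu g_{ab}$. Once that is established, $W=0$ gives $R_{abcd}=2\mu\,(g_{ac}g_{bd}-g_{ad}g_{bc})$ up to the Kulkarni--Nomizu normalisation. This is exactly the pointwise isotropy \eqref{eq:isotropic-Riem}, and Schur's theorem (used in \S\ref{sec:homogeneous}) then makes $\mu$ constant on the connected component, so $g$ is locally maximally symmetric.

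The main step is the linear algebra of \eqref{eq:app-key-algebraic}. Write $p_a:=P_{ae}u^e$. First contract \eqref{eq:app-key-algebraic} with $u^a$:
\[
u^2P_{cb}-u_c p_b-g_{cb}(u\cdot p)+u_b p_c=0 .
\]
Next contract the original identity with $g^{ab}$. This gives $u_a P^a{}_c - u_c P + 4p_c - p_c... $; done carefully, the trace over $a,b$ yields
\[
p_c - P u_c - p_c + 4p_c=0 ,
\]
so $p_c=\tfrac{P}{4}u_c$ up to the exact coefficient, where $P=P^a{}_a$. I will recompute this coefficient when writing the proof. The upshot is that $u$ is an eigenvector of $P$, say $p_c=\kappa u_c$.

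Substituting $p_c=\kappa u_c$ into the $u^a$-contracted identity, the two terms $-u_cp_b+u_bp_c$ cancel, leaving
\[
u^2P_{cb}=\kappa u^2 g_{cb}.
\]
Because $u^2\neq0$ on $U_*$, we can divide and obtain $P_{cb}=\kappa g_{cb}$. This is precisely where the non-null hypothesis enters, and it is why the null branch behaves differently. A consistency check is worthwhile here. With $P=\kappa g$, \eqref{eq:app-key-algebraic} reads $\kappa(u_ag_{cb}-u_cg_{ab}-g_{cb}u_a+g_{ab}u_c)=0$, which holds identically. So the identity carries no further constraint and the conclusion is sharp.

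Finally, apply Schur. Since $W=0$ and $P_{ab}=\kappa g_{ab}$, we have $R_{abcd}=\kappa(g\owedge g)_{abcd}$, which is isotropic at each point. In dimension $4\ge3$ the contracted Bianchi identity forces $\nabla\kappa=0$. Equivalently, the vanishing of the Cotton tensor gives $\nabla_c\kappa\, g_{ab}=\nabla_a\kappa\, g_{cb}$, and tracing this yields $3\nabla_c\kappa=0$. Hence $\kappa$ is constant on the connected component, and the metric is locally de Sitter, anti-de Sitter, or flat.

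The main obstacle I expect is bookkeeping rather than ideas. Two things need care: the exact trace coefficients, so that the derivation of the eigenvector relation is sign-correct, and confirming that Proposition~\ref{prop:app-integrability} has exactly the stated form. That form is what makes $u$ an eigenvector of $P$, and it is the linchpin of the argument.
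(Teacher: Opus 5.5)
Your proposal is correct and follows essentially the paper's route. The $g^{ab}$-trace of \eqref{eq:app-key-algebraic} makes $u$ an eigenvector of $P$; your coefficient $4p_c=Pu_c$ is in fact right, and its exact value is irrelevant anyway. Non-nullness of $u$ then kills the tracefree part of $P$. You get there by contracting with $u^a$, where the paper instead derives $u_{[a}S_{c]b}=0$ and $S_{ab}=\sigma u_a u_b$. Your explicit Cotton/Bianchi argument that $\kappa$ is constant also fills in the Schur step the paper leaves implicit when it passes from ``Einstein with $W=0$'' to constant sectional curvature.
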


\begin{proof}
Define $J:=P^a{}_a$ and the tracefree Schouten tensor $S_{ab}:=P_{ab}-\frac{J}{4}g_{ab}$. Contracting \eqref{eq:app-key-algebraic} with $g^{ab}$ yields $P_{ab}u^b=\frac{J}{4}u_a$, equivalently $S_{ab}u^b=0$. Substituting into \eqref{eq:app-key-algebraic} yields $u_{[a}S_{c]b}=0$.

At any point in $U_*$, $u_a$ is non-null. For each fixed $b$, the 1-form $S_{\cdot b}$ is proportional to $u_\cdot$, so $S_{ab}=u_a\xi_b$ for some $\xi_b$. Symmetry gives $\xi_a=\sigma u_a$, hence $S_{ab}=\sigma u_a u_b$. Taking the trace: $0=\sigma u^2$. On $U_*$, $u^2\neq 0$, so $\sigma=0$ and $S_{ab}=0$.

Thus $P_{ab}=\frac{J}{4}g_{ab}$: $g$ is Einstein. Combined with $W=0$ from Lemma~\ref{lem:app-weyl}, $g$ has constant sectional curvature.
\end{proof}

\subsection{The null branch: algebraic characterization}
\label{app:null-branch}

\begin{prop}[Structure of the null-gradient branch]
\label{prop:app-null-structure}
Assume \eqref{eq:app-saturation} on an open set, and restrict to $U_{\mathrm{null}}:=\{x\in U: u^2=0\}$. Then:
\begin{enumerate}
\item $W_{abcd}=0$ (conformal flatness).
\item $R=0$.
\item The Schouten tensor is pure radiation aligned with $u_a$: $P_{ab}=\sigma\,u_a u_b$ for some scalar field $\sigma$. Equivalently, $R_{ab}=2\sigma\,u_a u_b$.
\item The Hessian of $\phi$ is rank one: $\nabla_a\nabla_b\phi=\alpha\,u_a u_b$ for $\alpha=1+(1-e^{2\phi})\sigma$.
\end{enumerate}
\end{prop}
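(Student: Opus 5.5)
The plan is to work on the interior of $U_{\mathrm{null}}$ and, within it, on the open subset where $u_a\neq 0$. On an open set where $u_a=0$, $\phi$ is locally constant, and we are back in the constant-$\Omega$ case already treated in \S\ref{sec:homogeneous}, which gives $R_{abcd}=0$. Restricting to the interior matters: there $u^2=0$ holds identically on an open set, so it may be differentiated. All four items then follow from Lemma~\ref{lem:app-weyl}, the PDE \eqref{eq:app-u-pde}, and the integrability identity of Proposition~\ref{prop:app-integrability}.

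Item 1 is immediate from Lemma~\ref{lem:app-weyl}, since $U_{\mathrm{null}}\subset U$.

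For the algebraic structure I follow the proof of Theorem~\ref{thm:app-nonnull-rigidity} but stop before the step that used $u^2\neq 0$. Contracting \eqref{eq:app-key-algebraic} with $g^{ab}$ gives $P_{ab}u^b=\tfrac{J}{4}u_a$, equivalently $S_{ab}u^b=0$. Substituting back gives $u_{[a}S_{c]b}=0$. Because $u_a\neq0$, each column $S_{\cdot b}$ is proportional to $u$. Symmetry of $S$ then gives $S_{ab}=\sigma u_au_b$. The trace condition $0=\sigma u^2$ is now automatically satisfied, so $\sigma$ stays undetermined. This leaves
\[
P_{ab}=\tfrac{J}{4}g_{ab}+\sigma u_au_b .
\]

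The remaining task, and the only real obstacle, is to show $J=0$; the algebraic identity alone cannot do this, since $P_{ab}u^b=\tfrac J4 u_a$ holds for every $J$. The extra input is the differential identity $u^b\nabla_a u_b=\tfrac12\nabla_a(u^2)=0$, valid on the open set where $u^2\equiv0$. Contract \eqref{eq:app-u-pde} with $u^b$ and use $u^2=0$ to drop the $u_au_b$ and $g_{ab}$ terms:
\[
0=\lambda P_{ab}u^b=\lambda\tfrac{J}{4}u_a .
\]
On $U$ we have $\lambda=1-e^{2\phi}\neq0$, and $u_a\neq0$ by the restriction above. Hence $J=0$. Since $J=R/6$ in four dimensions, this gives item 2, $R=0$.

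Item 3 follows: $P_{ab}=\sigma u_au_b$. Inverting the definition of the Schouten tensor gives $R_{ab}=2P_{ab}+\tfrac R6 g_{ab}=2\sigma u_au_b$.

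Item 4 follows by substituting this $P_{ab}$ and $u^2=0$ into \eqref{eq:app-u-pde}:
\[
\nabla_a\nabla_b\phi=\lambda\sigma u_au_b+u_au_b=(1+(1-e^{2\phi})\sigma)\,u_au_b ,
\]
which is the stated $\alpha$.
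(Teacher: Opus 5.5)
Your proof is correct and follows the paper's route. Item 1 comes from Lemma~\ref{lem:app-weyl}, $S_{ab}=\sigma u_au_b$ comes from the integrability identity, and $J=0$ comes from differentiating $u^2=0$ and contracting \eqref{eq:app-u-pde} with $u^b$ to get $\lambda P_{ab}u^b=0$. Your explicit restrictions to the interior of $U_{\mathrm{null}}$ (needed to differentiate $u^2=0$), to points where $u_a\neq0$, and your use of $\lambda\neq0$ make precise steps that the paper leaves implicit.
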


\begin{proof}
Item (1) is Lemma~\ref{lem:app-weyl}. When $u$ is null, \eqref{eq:app-key-algebraic} allows $S_{ab}=\sigma u_a u_b$. Differentiating $u^2=0$ and using \eqref{eq:app-u-pde} yields $P_{ab}u^b=0$. Combined with $P_{ab}u^b=\frac{J}{4}u_a$, this forces $J=0$, hence $R=0$ and $P_{ab}=\sigma u_a u_b$. The Hessian statement follows from substituting into \eqref{eq:app-u-pde}.
\end{proof}

\subsection{Local ansatz for the null branch and non-uniqueness}
\label{app:null-ansatz}

We exhibit explicit local families of solutions in the null branch for a \emph{fixed metric-independent} conformal factor.

By Lemma~\ref{lem:app-weyl}, on $U$ the metric is conformally flat. Introduce local double-null coordinates $(u,v,x,y)$ and write
\begin{equation}
\label{eq:app-ansatz-g}
g=e^{2\psi}\eta,\qquad
\eta=-2\,du\,dv+dx^2+dy^2.
\end{equation}
Fix $\phi=\phi(u)$, so $u_a=\phi'(u)\,du_a$ and $u^2=0$. The null-branch characterization forces $\psi=\psi(u)$, and the saturation condition reduces to
\begin{equation}
\label{eq:app-null-ode}
-(\psi+\phi)''+(\psi'+\phi')^2
=
e^{2\phi}\big(-\psi''+(\psi')^2\big).
\end{equation}
Setting $y:=\psi'$, this becomes a Riccati equation
\begin{equation}
\label{eq:app-riccati}
y'
=
y^2
-\frac{2\phi'}{1-e^{2\phi}}\,y
-\frac{(\phi')^2-\phi''}{1-e^{2\phi}}
\end{equation}
with coefficients determined by the fixed function $\phi(u)$.

\begin{prop}[Local non-uniqueness in the null branch]
\label{prop:app-null-nonuniqueness}
Fix a smooth function $\phi(u)$ on an interval $I$ with $e^{2\phi}\neq 1$ on $I$. Then the null-branch saturation condition admits a local family of metrics of the form \eqref{eq:app-ansatz-g}, parametrized by initial data for the Riccati equation \eqref{eq:app-riccati}.
\end{prop}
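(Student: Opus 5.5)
Fix $\phi(u)$ on $I$ with $e^{2\phi}\neq 1$, and plug the ansatz \eqref{eq:app-ansatz-g} with $\psi=\psi(u)$ directly into the saturation condition \eqref{eq:app-saturation}. The proof then reduces to a standard ODE existence argument.

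Both $g=e^{2\psi}\eta$ and $g'=e^{2(\psi+\phi)}\eta$ are conformal to flat $\eta$, so their Weyl tensors vanish. It therefore suffices to compare Schouten tensors. Let $\chi$ stand for either $\psi$ or $\psi+\phi$, and apply the transformation law \eqref{eq:app-schouten-transform} with $\eta$ as base ($P^\eta=0$). The gradient $d\chi=\chi'(u)\,du$ is null for $\eta$, so the $(\nabla\chi)^2$ term drops. The $\eta$-Hessian of a function of $u$ alone is $\chi''\,du\,du$ (Christoffels of $\eta$ vanish). Hence
\[
P^{(\chi)}_{ab}=\big(-\chi''+(\chi')^2\big)\,du_a\,du_b .
\]
Each Riemann tensor is then $g_\chi\owedge P^{(\chi)}$. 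Lowering indices of $R'$ with $g$ rather than $g'$ introduces a factor, so I must fix bookkeeping carefully: $R'_{abcd}$ with first index lowered by $g$ equals $e^{-2\phi}$ times the fully $g'$-lowered tensor. The latter is $e^{2(\psi+\phi)}\eta\owedge P'$, and the right side of \eqref{eq:app-saturation} is $e^{2\phi}e^{2\psi}\eta\owedge P$. Since $\eta\owedge(\cdot)$ is injective on symmetric tensors in $d=4$, the condition reduces to a scalar identity between the coefficients of $du\,du$. With the paper's convention this is exactly \eqref{eq:app-null-ode}. I will take \eqref{eq:app-null-ode} as the stated reduction and only verify the $\eta$-Hessian computation. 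The conformal weights are the one point where a sign or power could slip, so I would double-check them against Lemma~\ref{lem:app-schouten-pde}.

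Next I expand \eqref{eq:app-null-ode} with $y=\psi'$:
\[
-y'-\phi''+y^2+2y\phi'+\phi'^2=e^{2\phi}(-y'+y^2).
\]
Collecting terms gives $(1-e^{2\phi})(y^2-y')+2\phi'y+\phi'^2-\phi''=0$. Dividing by $1-e^{2\phi}\neq0$ yields the Riccati equation \eqref{eq:app-riccati}, whose coefficients are smooth on $I$.

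By Picard--Lindelöf, for each $u_0\in I$ and $y_0\in\mathbb{R}$ there is a unique smooth solution $y$ on a neighbourhood $J\ni u_0$. Riccati solutions may blow up, which is why the result is only local. Set $\psi(u)=\psi_0+\int_{u_0}^u y$. The resulting metric $g$ satisfies \eqref{eq:app-saturation} on $J\times\mathbb{R}^3$. It lies in the null branch because $u_a=\phi'du_a$ is $g$-null.

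To show the family is genuinely non-unique, note that distinct $y_0$ give distinct $\psi'$. The corresponding Schouten coefficients $-\psi''+\psi'^2$, i.e. the Ricci tensors, then generically differ. Moreover, the trivial family $\psi\to\psi+\text{const}$ is a homothety, and it gives a second free parameter. So the family is parametrized by $(\psi_0,y_0)$.

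The main obstacle is the index-weight bookkeeping that turns \eqref{eq:app-saturation} into \eqref{eq:app-null-ode}. Everything after it is routine ODE theory.
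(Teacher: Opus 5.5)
Your proposal is correct and follows the paper's route: reduce the saturation condition to \eqref{eq:app-null-ode}, rewrite it as a Riccati equation whose coefficients are smooth because $e^{2\phi}\neq 1$ on $I$, apply Picard--Lindel\"of, and integrate $\psi'=y$; you also check the reduction and the conformal weights, which the paper only asserts. One correction: your collected identity $(1-e^{2\phi})(y^2-y')+2\phi'y+(\phi')^2-\phi''=0$ actually gives $y'=y^2+\frac{2\phi'}{1-e^{2\phi}}\,y+\frac{(\phi')^2-\phi''}{1-e^{2\phi}}$, whose last two terms have the opposite signs to the printed \eqref{eq:app-riccati}, so state your equation rather than claim it coincides with the paper's (your signs are the ones consistent with \eqref{eq:app-null-ode}, and the existence argument is unaffected).
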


\begin{proof}
For fixed $\phi$, \eqref{eq:app-riccati} is a smooth first-order ODE. By the standard existence and uniqueness theorems for solutions of ODEs, for any initial value $y(u_0)=y_0$, there exists a unique local solution. Integrating fixes $\psi$ up to a constant; each such $\psi$ defines a local metric satisfying the saturation condition. Finally, varying $y_0$ produces a continuous family of saturating metrics for the same fixed $\Omega=e^\phi$.
\end{proof}

\begin{remark}
The reduced equations are locally well-posed: given initial data, the Riccati equation has a unique solution. But this is not the relevant notion of uniqueness. The saturation condition itself does not fix the initial data. Any choice of $y_0$ yields a metric satisfying the condition, so the condition fails to single out a unique saturating geometry. The criterion was supposed to characterize a privileged reference spacetime but it clearly can only characterise an entire family. 
\end{remark}

\subsection{Summary}
\label{app:summary-branches}

\begin{thm}[Summary]
\label{thm:app-summary}
Let $g'_{ab}=e^{2\phi}g_{ab}$ and assume the conformal saturation condition \eqref{eq:app-saturation}. On the nontrivial region $U=\{e^{2\phi}\neq 1\}$:
\begin{enumerate}
\item $W_{abcd}=0$, so $g$ is conformally flat.
\item In the non-null branch $(\nabla\phi)^2\neq 0$, $g$ is locally maximally symmetric.
\item In the null branch $(\nabla\phi)^2=0$, any solution lies in a conformally-flat pure-radiation class; moreover, for fixed metric-independent $\phi$, explicit local families of saturating metrics exist.
\end{enumerate}
Therefore the conformal saturation condition does not, in general, uniquely determine a metric.
\end{thm}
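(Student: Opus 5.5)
The Summary theorem is an assembly of results already proved in this appendix, so the proof will cite them item by item rather than compute anything new.

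\textbf{Item 1} is exactly Lemma~\ref{lem:app-weyl}. On $U$ we have $e^{2\phi}\neq 1$. Comparing the Weyl parts of \eqref{eq:app-saturation}, using the transformation law $W'_{abcd}=e^{2\phi}W_{abcd}$, gives $(1-e^{2\phi})W_{abcd}=0$, hence $W=0$ on $U$.

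\textbf{Item 2} follows from Lemma~\ref{lem:app-schouten-pde}, Proposition~\ref{prop:app-integrability} and Theorem~\ref{thm:app-nonnull-rigidity}. With $W=0$, the saturation condition reduces to the Schouten PDE \eqref{eq:app-u-pde}. Its integrability condition, obtained from the Ricci identity together with vanishing of the Cotton tensor, is the algebraic identity \eqref{eq:app-key-algebraic}. The trace of that identity gives $S_{ab}u^b=0$, and substituting back gives $u_{[a}S_{c]b}=0$, so $S_{ab}=\sigma u_au_b$. Where $u^2\neq0$, tracelessness forces $\sigma=0$. Hence $g$ is Einstein and conformally flat, so it has pointwise isotropic curvature. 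Schur's theorem, as invoked in \S\ref{sec:homogeneous}, then makes the sectional curvature locally constant on each connected component of $U_*$.

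\textbf{Item 3} has two parts.
\begin{itemize}
\item The algebraic half (conformally flat, $R=0$, $R_{ab}=2\sigma u_au_b$) is Proposition~\ref{prop:app-null-structure}.
\item The existence half is Proposition~\ref{prop:app-null-nonuniqueness}. Here one uses the ansatz \eqref{eq:app-ansatz-g} with $\phi=\phi(u)$ and $\psi=\psi(u)$. Since $\phi$ depends only on the null coordinate, $u_a\propto du_a$ and so $u^2=0$ automatically, placing these metrics in the null branch. The saturation condition reduces to the Riccati equation \eqref{eq:app-riccati}. Its coefficients are smooth on $I$ because $1-e^{2\phi}\neq0$, so Picard--Lindel\"of gives a local solution for each initial value $y_0$.
\end{itemize}

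The closing sentence, non-uniqueness, then follows from item 3. Different values of $y_0$ give different $\psi'$, so I will check that they yield non-isometric, or at least genuinely distinct, metrics. Concretely, I would compute the single nonzero Ricci component $R_{uu}\propto -2\psi''+2(\psi')^2$ (up to convention) and show that it varies with $y_0$. This gives a continuous family of distinct saturating metrics for one fixed $\Omega$, so the condition does not single out a unique geometry.

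The main obstacle is the reduction to \eqref{eq:app-null-ode}. For a conformally flat $g=e^{2\psi}\eta$ with $\psi=\psi(u)$, one must verify that all components of \eqref{eq:app-saturation} other than the $uu$ component of the Schouten tensor vanish identically. I would apply \eqref{eq:app-schouten-transform} twice, first $\eta\to g$ and then $g\to g'$ with total factor $\psi+\phi$, using that $du$ is null and $\nabla$-parallel in $\eta$. Then every term besides $du\otimes du$ drops out, and the $uu$ components give \eqref{eq:app-null-ode}.

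A secondary point is the Schur step in item 2, since Theorem~\ref{thm:app-nonnull-rigidity} gives only pointwise isotropy. There I would add the contracted Bianchi argument, $\nabla_a J=0$ for an Einstein metric in $d=4$, to conclude constancy on connected components.
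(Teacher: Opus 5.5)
Your route is the paper's own. The Summary has no separate proof beyond assembling Lemmas~\ref{lem:app-weyl}--\ref{lem:app-schouten-pde}, Proposition~\ref{prop:app-integrability}, Theorem~\ref{thm:app-nonnull-rigidity} and Propositions~\ref{prop:app-null-structure}--\ref{prop:app-null-nonuniqueness}. Your contracted-Bianchi step usefully makes explicit the Schur argument that Theorem~\ref{thm:app-nonnull-rigidity} leaves tacit.

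\textbf{Item 1.} The one computation you spell out here does not work as written; the slip is inherited from the proof of Lemma~\ref{lem:app-weyl}. In \eqref{eq:app-saturation} the indices of $R'$ are lowered with $g$. That is what ${R'}^a{}_{bcd}=\Omega^2R^a{}_{bcd}$ (the singular limit of \eqref{eq:delta_c} with $v'=\Omega^{-1}v$) amounts to, and it is the reading under which the constant-$\Omega$ case of \S\ref{sec:homogeneous} forces $R_{abcd}=0$. In that convention $W'_{abcd}=g_{ae}{W'}^e{}_{bcd}=W_{abcd}$, because ${W}^a{}_{bcd}$ is conformally invariant. The law $W'_{abcd}=e^{2\phi}W_{abcd}$ that you quote holds only for $g'$-lowered indices. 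Paired with $R'_{abcd}=e^{2\phi}R_{abcd}$, it makes the two Weyl parts agree identically, so no constraint on $W$ follows. The trace parts would then give $P'=P$, not the $P'=e^{2\phi}P$ of Lemma~\ref{lem:app-schouten-pde}. The repair is one line: the $g$-lowered curvature of $g'$ is $W_{abcd}+(g\owedge P')_{abcd}$, and comparing it with $e^{2\phi}(W+g\owedge P)_{abcd}$ yields $(1-e^{2\phi})W=0$ and $P'=e^{2\phi}P$ at once.

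\textbf{Item 3.} This item rests on Proposition~\ref{prop:app-null-structure}, whose proof differentiates $u^2=0$. So it holds where $u^2\equiv0$ on an open set, not on the pointwise set $\{x\in U:u^2=0\}$. For a counterexample, take $g=\Theta^{-2}\eta$ and $e^{2\phi}g=\Theta_a^{-2}\eta$, with $\Theta_a=1+\tfrac{K}{4}\eta(x-a,x-a)$ and $\Theta=\Theta_0$. Both have constant curvature $K\neq0$, so \eqref{eq:app-saturation} holds. Yet for suitable $a$ the gradient $d\phi$ turns null on a hypersurface meeting $U$, where $R=12K\neq0$. So state item~3 on the interior of the null set. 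Your existence half is unaffected, since the ansatz has $u^2\equiv0$.

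Two smaller points about that half. First, when you carry out your reduction you will find \eqref{eq:app-null-ode} correct, but solving it for $y'$ gives $y'=y^2+\frac{2\phi'}{1-e^{2\phi}}\,y+\frac{(\phi')^2-\phi''}{1-e^{2\phi}}$. The last two signs in \eqref{eq:app-riccati} are off. This is harmless for Picard--Lindel\"of, and only the corrected version agrees with item~4 of Proposition~\ref{prop:app-null-structure}. Second, bare distinctness is cheap: \eqref{eq:app-saturation} is invariant under $g\mapsto c^2g$ at fixed $\phi$, which is just the additive constant in $\psi$. Your $R_{uu}$ check has real content once phrased invariantly. The scalar $\sigma$ defined by $R_{ab}=2\sigma u_au_b$ is an invariant of the pair $(g,\phi)$. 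At a point with $\phi'(u_0)\neq0$ it depends affinely on $y_0$ with slope $-2/\big(\phi'(1-e^{2\phi})\big)$. Hence different $y_0$ give pairs related by no $\phi$-preserving diffeomorphism. A component value alone could not show non-isometry of $g$ by itself: these metrics are plane waves in Rosen form, and all their polynomial curvature invariants vanish.
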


\bibliographystyle{plainnat}
\bibliography{references3}
\end{document}